\newtheorem{Theorem}{Theorem}
\newtheorem{Proposition}{Proposition}  
\newtheorem{Lemma}{Lemma}
\newcommand {\cC}{\mbox{${\mathcal C}$}}
\newcommand{\R}{\mathbb{R}}
\newcommand{\Cinf}{C^\infty }
\newcommand{\Mstar}{\mbox{$M_*$}}
\DeclareMathOperator{\scl}{sc}
\DeclareMathOperator{\topol}{top}
\begin{document}

\title{Through the Singularity}

\author{Michael Heller, Tomasz Miller \\ \normalsize Copernicus Center for Interdisciplinary Studies, Jagiellonian University \\ 
\normalsize Szczepa\'{n}ska 1/5, 31-011 Cracow, Poland \\ 
Wies{\l}aw Sasin\\ \normalsize Institute of Mathematics and Cryptology, Military University of Technology\\
\normalsize Kaliskiego 2, 00-908 Warsaw, Poland \\[12pt]}

\date{\today}
\maketitle

\begin{abstract}
In this work, we propose a dangerous journey -- a journey through the strong  singularity from one universe to another or from inside of a black hole to its 'inverse' as a white hole. Such singularities are hidden in the Friedman and Schwarzschild solutions; we call them malicious singularities. The journey is made possible owing to two generalizations. The first generalization consists in considering spaces with differential structures on them (the so-called ringed spaces) rather than the usual manifolds. This entails a generalization of the concept of smoothness, which allows us to think about a smooth passage through the singularity. The second generalization is related to the concept of curve. We show that if a kind of singularity is implanted in the set of curve's parameters, along with an appropriate topology, in such a way that the structure of the set of parameters corresponds to the structure of the singular space-time, the curve can smoothly -- in a generalized sense -- pass through the singularity. 
\end{abstract}

\section{Introduction}
The rapidly developing physics of black holes, both theoretically and observationally, brings the singularity problem into sharp focus once again. In the classical period of studying singularities and formulating theorems about their existence (the 1970s and 1980s of the previous century), it was still possible to neutralize the problem by eliminating the beginning of the universe with the help of various tricks. Today, when observations clearly suggest that there are black holes at the centers of most galaxies, the problem must be faced in all its severity. Until we have a final quantum theory of gravity, classical methods should be fully exploited. They may not only be valuable in themselves, but also point the way to the future quantum gravity theory.

In this work, we want to do the impossible: not only reach the singularity, but also go through it to a new universe, if one exists 'on the other side'. This is not an easy undertaking. All the time, we have in mind strong singularities, such as the ones hidden in the Friedman or Schwarzschild solutions; in the following we will call them malicious singularities. So far, such singularities have been understood as points of a space-time boundary (not belonging to space-time, so its points are not points in the usual sense of the word) at which the histories of observers, photons or other particles terminate. If so, then these histories -- by definition -- cannot go through the singularity. However, in mathematics, sometimes a small generalization is enough to overcome another conceptual barrier.

In this work, two such generalizations turn out to be crucial. The first concerns the concept of smoothness. We use this concept, as it is understood in the theory of Sikorski's differential spaces \cite{Sikorski} and the theory of Grothendieck's sheaf spaces \cite{Grothendieck}. The essence of this approach is to consider, instead of the usual differential manifolds, a more general class of spaces with differential structures defined on them. These so-called \emph{ringed spaces} will be recalled in Section \ref{sec::2}. This method is employed to analyze malicious singularities, regarded as points of the $b$-boundary of space-time. A short review of this issue can be found in Section \ref{sec::3}. The second generalization is related to the concept of curve. A smooth curve (in the traditional sense) has no chance of passing through a malicious singularity, but if a kind of singularity is implanted in its set of parameters (along with an appropriate topology), so as the structure of the set of parameters corresponds to the structure of the singular space-time, the curve can smoothly -- in a generalized sense -- pass through the singularity. This is shown in Section \ref{sec::4}.

\section{The Concept of Smoothness}
\label{sec::2}
Let us recall the standard concept of smoothness as we encounter it in every analysis course. Let $M$ be an $n$-dimensional differential manifold, and $\mathcal{A} = \{(U_{\alpha}, \phi_{\alpha })\}_{\alpha \in I}$ an atlas on $M$. A map $f: M \to \R$ is said to be a smooth function on $M$ if, for every $p \in M$, there exixts a chart $(U, \phi) \in \mathcal{A}$ such that $p \in U$ and $f \circ \phi^{-1}: \phi(U) \to \R $ is smooth in the usual sense. The smoothness condition for the transition maps between charts in the atlas guarantees that if $f$ is smooth in one chart (in a neighbourhood of $p$), it is smooth in any other chart.

Let $N$ be another $n$-dimensional manifold, and let us consider a map $F: M \to N$. The map $F$ is said to be smooth if, for every $p \in M$, there exists a chart $(U, \phi)$ on $M$ with $p \in U$, and a chart $(V, \psi )$ on $N$ with $F(p) \in V$, such that $F(U) \subset V$ and $\psi \circ F \circ \phi^{-1}: \phi (U) \to\psi(V)$ is a smooth function on $\R^n$.

Any smooth mapping $F: M \to N$ between manifolds induces linear maps between their tangent spaces $F_{*p}: T_pM \to T_{F(p)}N$. With the help of this map and its dual, vector fields, differential forms and other local geometric data can be transported between manifolds.

To deal with singularities we need a generalization of this smoothness concept. We find it in Sikorski's theory of differential spaces. His idea was to collect all functions that we would like to consider 'smooth' in a family defined on the space $M$ called its differential structure. Of course, for this concept to be reasonable, the family must satisfy suitable conditions. This idea has been implemented in the following way (for a short introduction to the theory of differential spaces see \cite{GruszczHel93,Structured}).  

Let $(M, \topol M)$ be a topological space, and $C$ a family of functions defined on $M$ (with values in any set). A function $f$, defined on an open subset $V \subset M$, is said to be a local $C$-function on $V$ if for every $p \in V$ there exists its open neighbourhood $U \subset V$ and a function $g \in C$ such that $f|U = g|U$. The set of all local $C$-functions on $V \subset M$ is denoted by $C_V$.

From the definition it follows that $C|V \subset C_V$, and if $V = M$ then $C \subset C_M$. If $C = C_M$ (every local $C$-function on $M$ belongs to $C$), we say that the family $C$ is closed with respect to localization. This is the property we would require for smooth functions. Another such property is the following.

The non-empty family $C$ of real valued functions defined on a set $M$ is said to be closed with respect to composition with smooth functions provided the following condition is satisfied: if $\omega \in \Cinf (\R^n)$ and $f_1, \ldots , f_n \in C$, then $\omega (f_1, \ldots , f_n) \in C$. The family $C$ satisfying this condition is denoted by $\scl C$.

A family $C$ of real functions on $M$, which is closed with respect to localization and closed with respect to composition with smooth functions is called the differential structure on $M$, and the pair $(M, C)$ the differential space. All functions belonging to $C$ are smooth \textit{ex definitione}.

A mapping between differential spaces is smooth if it preserves the smoothness understood in the above way, i.e. if $(M, C)$ and $(N, D)$ are differential spaces, then the map $f: M \to N$ is smooth if, for every function $g \in D$, $g \circ f \in C$.

The above understanding of smoothness works well in many cases, but is still ineffective when applied to stronger type singularities. What we need is a `localization' of this concept. This is achieved in the following way.

Let $(M, \topol M)$ be a topological space. The sheaf $\cC $ of real continuous functions on $(M, \topol M)$ is said to be a differential structure if, for any $U \in \topol M$, one has $\scl \cC (U) = \cC (U)$, i.e. every $\cC (U)$ is closed with respect to composition with smooth functions. The pair $(M, \cC )$ is called the structured or sheaf space (for a theory of structured spaces see \cite{Structured}).

Since the sheaf definition implies that each  $\cC(U)$ is closed with respect to localization, every $\cC (U)$ is a differential structure on $U$ (in the sense of Sikorski). 

Let $(M, \cC )$ and $(N, \mathcal{D})$ be two structured spaces. A mapping 
\begin{align*}
F:(M, \cC) \to (N, \mathcal{D})
\end{align*}
is said to be smooth if $F^*(\mathcal{D}) \subset \cC $, i.e. for every $U \in \topol N$ and $g \in \mathcal{D}(U)$ one has $F^* g := g \circ F \in \cC(F^{-1}(U))$.

Structured spaces as objects and smooth mappings between them as morphisms form a category of structured spaces. We will use the concept of smoothness involved in this category in our further considerations regarding going through a malicious singularity.

\section{Malicious Singularity}
\label{sec::3}
The problem is that the singularity cannot belong to space-time, because the metric and even manifold structures of space-time break down in it but, on the other hand, it must be defined using theoretical tools available within space-time, otherwise, it would not even have a quasi-operational significance. Moreover, the definition of a singularity should also be `theoretically operational', i.e. it should be possible to prove statements about its occurrence and properties based on it.

As a result of long discussions (described in detail in \cite{TCE}), it turned out that these criteria are met by the definition of a singularity as a geodesic incompleteness of space-time. An inextendible space-time\footnote{Space-time $(M', g')$ is an extension of space-time $(M, g)$ if there is an isometric embedding $\rho: M \to M'$. Space-time is inextendible if there is no such embedding with $\rho(M) \neq M'$.} is geodesically complete if all (timelike and null) geodesics in it can be continued to arbitrarily large values of an affine parameter (in both directions). If this is not the case, a given space-time is said to be geodesically incomplete. An incomplete geodesic represents the history of an object that `runs into a premature end' \cite[p. 139]{TCE}, and may be regarded as ending at a singularity.  

Singularities, understood in this way, can be organized into a kind of  boundary of space-time, the so-called $g$-boundary. The point of the $g$-boundary of space-time is defined by an equivalence class of incomplete geodesics that meet the condition allowing us to regard them as ending at the same point.

As we can see, geodesical incompleteness is a sufficient condition for the existence of a singularity rather than its full fledged definition, but it well served its purpose since it was possible, with the help of it, to prove several theorems on the  existence of singularities \cite{HawkingEllis}. And mostly for this reason, the $g$-boundary construction has  outdistanced other similar constructions such as causal boundary \cite{KronheimerPenrose}, $p$-boundary  \cite{Dodson1} or essential boundary \cite{Clarke1}, but it was not without problems itself. One of them was that geodesic  (in)completeness applies, as the name suggests, only to  geodesics, while also timelike curves can, in principle, 'break off' at a singularity. It was Geroch who constructed a geodesically complete space-time containing an inextendible timelike curve of finite length and bounded acceleration \cite{Geroch1968}. The latter property is essential since `an observer with a suitable rocketship and a finite amount of fuel could traverse this curve' \cite[p. 258]{HawkingEllis}.

This shortcoming was supposed to be remedied by the construction proposed by Schmidt \cite{Schmidt71}. Since it plays an important role in our considerations, let us present it briefly (for a detailed account \cite{Clarke,Dodson2,Schmidt71}). A connection on space-time $M$ induces a Riemann metric on a connected component of the fibre bundle $O(M)$ of orthonormal frames over $M$. It does it in such a way as to make horizontal and vertical subspaces orthogonal. This leads to the existence of the Riemann metric $G$ on $O(M)$. This metric is not unique but the subsequent steps of the construction do not depend of the choice of the one of admissible metrics. With the help  of $G$ we define the distance function and construct the Cauchy completion $\overline{O(M)}$ of $O(M)$. We extend, by continuity, the action of the structural group $O(3, 1)$ of the bundle from $O(M)$ to $\overline{O(M)}$ and we form the quotient space $\overline{O(M)}/O(3, 1)$ to define
\begin{align*}
\partial_b M := \Mstar - M = \pi(\overline{O(M)}) - \pi (O(M))
\end{align*}
where $\pi: \overline{O(M)} \to \overline{O(M)}/O(3, 1)$ is the canonical projection, and $M_* = M \cup \partial_b M$ the space-time with its $b$-boundary.

The elements of the $b$-boundary of space-time are now the equivalence classes (of a suitable equivalence relation) of not necessarily causal geodesics, but also the equivalence classes of other curves. If, however, $b$-boundary is restricted to causal geodesics, it changes into $g$-boundary of space-time. So it seemed that the problem of a 'working definition' of singularity had been solved. It was an unpleasant surprise when Bosshard \cite{Bosshard} and Johnson \cite{Johnson} independently discovered the following pathologies in which $b$-boundary are involved.  If the singularity in the closed Friedman solution or Schwarzschild solution of Einstein field equations is represented as a point $p$ in the $b$-boundary, $p\in \partial_bM$, then the fibre $\pi^{-1}(p)$ of the fibre bundle of linear frames over $p$ degenerates to a single point. This is particularly frustrating with respect to Friedman's closed solution, because in that case the beginning and end of the universe would be the same, and only, point of the boundary. Moreover, this $b$-boundary point turns out not to be Hausdorff separated from the rest of space-time. In fact, topology of this configuration is highly degenerate: the only open neighbourhood of the singularity is the whole of space-time. These conclusions seemed disastrous for the $b$-boundary construction: what was supposed to be a successful definition of singularity turned out to be a source of new difficulties.

The situation was clarified in \cite{SH94,Structured}. It turned out to be essential to employ the concept of smoothness as it is integrated into the construction of space-time as a structured space. The following theorem is crucial.
\begin{Theorem}\label{Theorem1}
Let $(M, \cC )$ be a structured space with the topology $\tau $, the weakest topology in which functions of $\cC $ are continuous, and $(M_*, \tau )$ a topological space such that $M_* = M \cup \{*\}$, $* \notin M$. And let the following conditions be satisfied
\begin{itemize}
\item 
$\tau|M = \tau_{\scriptsize\cC(M)}$,
\item 
$* \in U \in \tau \Rightarrow U = M$.
\end{itemize}
Then on the topological space $(M, \tau )$ there exists exactly one differential structure $\cC_*$ such that $\cC_*(M) = \cC(M)$ and $\cC_*(M_*) = \R $.
\end{Theorem}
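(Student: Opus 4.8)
The plan is to write down the only candidate for $\cC_*$ and then to check that it is indeed a differential structure. First I would unpack what the two hypotheses say about the topology of $\Mstar$: by the first, a subset $U$ with $*\notin U$ is $\tau$-open precisely when it is open in the structured-space topology $\tau_{\cC(M)}$ of $M$; by the second, the only $\tau$-open set containing $*$ is $\Mstar$ itself. Consequently the open sets of $\Mstar$ are exactly the open sets of $M$, together with $\Mstar$. This forces the definition: put $\cC_*(U):=\cC(U)$ for every open $U$ with $*\notin U$, and $\cC_*(\Mstar):=\R$, understood as the algebra of constant functions on $\Mstar$. Since the hypotheses already prescribe $\cC_*$ on $M$ and on $\Mstar$, and these are the only open sets that are not proper open subsets of $M$, the uniqueness part will essentially reduce to confirming that no other assignment on the proper open subsets of $M$ is compatible with the sheaf axioms; the substance of the theorem is the existence of $\cC_*$.

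Next I would verify that $\cC_*$ is a sheaf of continuous real functions. Continuity is immediate: elements of $\cC(U)$ are continuous for $\tau|U$, and constants are continuous on $\Mstar$. The restriction maps are the evident ones; the only thing to notice is that the restriction of a constant on $\Mstar$ to an open $U\subseteq M$ is again a constant, which lies in $\cC(U)$ because every differential structure contains the constant functions. For the separation and gluing axioms over an open cover $\{U_i\}$ of an open set $U$: if $*\notin U$ this is just the sheaf property of $\cC$ on $M$. If $*\in U$, then $U=\Mstar$, and here is the single load-bearing observation of the proof --- since $\Mstar$ is the only open set containing $*$, every open cover of $\Mstar$ must contain $\Mstar$ itself, say $U_{i_0}=\Mstar$; then in a compatible family $(f_i)$ the section $f_{i_0}\in\cC_*(\Mstar)$ is a constant $c$, and compatibility on $U_i\cap U_{i_0}=U_i$ forces each $f_i$ to be the constant $c$ as well, so the family glues uniquely to $c\in\cC_*(\Mstar)$. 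Finally, each $\cC_*(U)$ must be closed under composition with functions of $\Cinf(\R^n)$: for $*\notin U$ this is inherited from $\cC$, while on $\Mstar$ a composite of constants is a constant, so $\scl\,\cC_*(\Mstar)=\cC_*(\Mstar)$. Hence $\cC_*$ is a differential structure with $\cC_*(M)=\cC(M)$ and $\cC_*(\Mstar)=\R$, establishing existence.

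For uniqueness, let $\cD$ be any differential structure on $\Mstar$ with $\cD(M)=\cC(M)$ and $\cD(\Mstar)=\R$. On $\Mstar$ it already agrees with $\cC_*$, and on $M$ too, so it remains to show $\cD(U)=\cC(U)$ for every open $U\subsetneq M$. A function that agrees near each point of $U$ with the restriction of some $g\in\cC(M)=\cD(M)$ belongs, by gluing in the sheaf $\cD$, to $\cD(U)$; and by the hypothesis $\tau|M=\tau_{\cC(M)}$ the algebra $\cC(U)$ is precisely the set of such functions --- the local $\cC(M)$-functions on $U$, in the sense of Section~\ref{sec::2} --- so $\cC(U)\subseteq\cD(U)$. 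The reverse inclusion is the same argument with the roles of $\cC$ and $\cD$ exchanged. Therefore $\cD=\cC_*$.

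The step I expect to be the real crux is not any of these formal verifications individually but the way the proof exploits the deliberately degenerate topology around $*$ --- the fact that its only neighbourhood is all of $\Mstar$. It is exactly this degeneracy that, on the one hand, collapses the global algebra down to $\R$ while leaving every local algebra $\cC(U)$ over $M$ untouched, and, on the other hand, makes every open cover of $\Mstar$ ``trivial'' in the sense used in the gluing step. A secondary delicate point, needed only for uniqueness, is the identification of $\cC(U)$ with the set of local $\cC(M)$-functions on $U$, which is where the hypothesis tying $\tau$ to $\cC(M)$ is used; everything else is a routine transcription of the sheaf and differential-structure axioms.
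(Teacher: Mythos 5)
Your existence argument is sound, and you correctly isolate the role of the degenerate topology around $*$; note that the paper itself offers no proof to compare against (it defers to \cite{Structured}), so I assess your argument on its own terms. One remark: the condition $\cC_*(\Mstar)=\R$ need not be imposed by hand. Since the only $\tau$-neighbourhood of $*$ is $\Mstar$, \emph{every} continuous real function on $\Mstar$ is constant, so any sheaf of continuous functions on $(\Mstar,\tau)$ automatically has $\R$ as its algebra of global sections; this is exactly the content of the sentence following the theorem in the text, and it makes the ``trivial prolongation'' phenomenon a theorem rather than a stipulation.

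The genuine gap is in the uniqueness half, at the step where you assert that ``by the hypothesis $\tau|M=\tau_{\cC(M)}$ the algebra $\cC(U)$ is precisely the set of local $\cC(M)$-functions on $U$.'' That hypothesis is purely topological: it says the open sets of $M$ are generated by preimages of open sets of $\R$ under global sections. It does not say that the sheaf $\cC$ is \emph{generated} by its global sections, i.e.\ that every section over a proper open $U\subset M$ locally agrees with the restriction of some $g\in\cC(M)$. The inclusion $\cC(U)\supseteq(\cC(M))_U$ does follow from the gluing axiom, but the reverse inclusion is a nontrivial ``softness''-type property: a sheaf of functions is in general not determined by its global sections, so two differential structures on $\Mstar$ could a priori satisfy $\cC_*(M)=\cD(M)=\cC(M)$ and $\cC_*(\Mstar)=\cD(\Mstar)=\R$ while differing on some proper open subset of $M$. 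Your argument closes only under the additional assumption --- standard in the Sikorski/Heller--Sasin setting, where the sheaf is constructed as $U\mapsto(\cC(M))_U$ from a Sikorski differential structure $\cC(M)$, but not stated in the theorem as given here --- that $\cC$ is the localization sheaf of its global sections. You should either invoke that assumption explicitly (restricting uniqueness to extensions with $\cC_*|_M=\cC$) or prove the reverse inclusion; attributing it to the topological condition is not correct.
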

\begin{proof} See \cite{Structured}.
\end{proof} 


This theorem says that if we are inside space-time $M$, i.e. in any open set $U \in \tau $, we will not find any pathologies, everything will happen as in a regular (non-singular) space-time. However, if we want to prolong the differential structure $\cC $ from $M$ to $M_*$, this can only be done in a trivial way, that is only constant functions can be prolonged, and we obtain $\cC(M_*) = \R$. Therefore, the whole of $M_*$ collapses to a single point. 

We also have the following theorem.
\begin{Theorem} \label{Theorem2}
Let $p \in \partial_bM$. If the fibre $\pi^{-1}(p)$ in the fibre bundle of orthonormal frames over $M_*$ degenerates to a single point, then only constant functions can be prolonged to $M_*$, i.e. $\cC(M_*) = \R$.
\end{Theorem}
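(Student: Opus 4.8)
The plan is to reduce Theorem~\ref{Theorem2} to Theorem~\ref{Theorem1}. Take the distinguished point $*$ of Theorem~\ref{Theorem1} to be the $b$-boundary point $p$, so that $M_* = M \cup \{p\}$ with $p \notin M$; let $\cC$ be the differential structure carrying space-time as a structured space (so that $\cC(M)$ is the ordinary smooth structure and $\tau_{\cC(M)}$ the manifold topology), and let $\tau$ be the $b$-boundary topology on $M_*$, i.e.\ the quotient topology of $\overline{O(M)}/O(3,1)$ transported along $\pi$ (this plays the role of the topology called $\tau$ in Theorem~\ref{Theorem1}). It then suffices to verify the two bullet hypotheses of Theorem~\ref{Theorem1}; once this is done, Theorem~\ref{Theorem1} yields a unique differential structure $\cC_*$ on $(M_*,\tau)$ with $\cC_*(M)=\cC(M)$ and $\cC_*(M_*)=\R$, and since any prolongation of $\cC$ to $M_*$ is by definition such a differential structure, uniqueness forces it to be $\cC_*$. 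Hence the only prolonged functions defined on all of $M_*$ are the constants, which is exactly the assertion $\cC(M_*)=\R$.

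The first bullet, $\tau|M = \tau_{\cC(M)}$, is the routine part. Since $O(M)$ is open in its Cauchy completion $\overline{O(M)}$ and $\pi$ is an open map, $M = \pi(O(M))$ is open in $M_*$ and the subspace topology $\tau|M$ coincides with the topology that $O(M)/O(3,1)=M$ inherits from the bundle, which is just the manifold topology of space-time; this is part of Schmidt's construction. On a manifold the weakest topology making the smooth functions continuous is the manifold topology, so $\tau|M = \tau_{\cC(M)}$.

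The second bullet, $p\in U\in\tau \Rightarrow U = M_*$, is where the hypothesis on the fibre enters, and it is the main obstacle. Let $\tilde p$ be the unique point of $\pi^{-1}(p)$. Because $O(3,1)$ maps each fibre of $\pi$ onto itself, $\tilde p$ is a fixed point of the extended $O(3,1)$-action on $\overline{O(M)}$; because $p\in\partial_bM$, we have $\tilde p\in\overline{O(M)}\setminus O(M)$, so there is a $G$-Cauchy sequence $\{u_n\}\subset O(M)$ with $u_n\to\tilde p$ and, for each fixed $g\in O(3,1)$, $g\,u_n \to g\,\tilde p = \tilde p$. For open $U\ni p$ the preimage $\pi^{-1}(U)$ is an $O(3,1)$-invariant open subset of $\overline{O(M)}$ containing $\tilde p$, and since $\pi$ is open and surjective, $U=M_*$ is equivalent to $\pi^{-1}(U)=\overline{O(M)}$. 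So the claim reduces to: \emph{every $O(3,1)$-invariant open set $V\ni\tilde p$ equals $\overline{O(M)}$.} The crux is that, the entire fibre over $p$ having collapsed to the single point $\tilde p$, the neighbouring fibres must, near $\tilde p$, sweep out arbitrarily large arcs of their $O(3,1)$-orbits while staying inside any prescribed neighbourhood of $\tilde p$; combined with the non-compactness of $O(3,1)$ and the connectedness of $O(M)$, this lets one move a frame $u_n$ by suitable group elements until its orbit meets the orbit of an arbitrarily chosen frame over an arbitrary point of $M$, all the while remaining inside $V$. Hence $V\supseteq O(M)$, and by density $V=\overline{O(M)}$. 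This is precisely the mechanism isolated in the analyses of Bosshard~\cite{Bosshard} and Johnson~\cite{Johnson}, and in a detailed write-up I would either reproduce their argument --- tracking the $G$-length of fibre arcs along the horizontal lift of the incomplete curve that defines $p$ --- or invoke it directly.

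With both bullets established, Theorem~\ref{Theorem1} applies and gives $\cC_*(M_*)=\R$, completing the proof. The only genuinely delicate point is the second bullet: one must control the behaviour of the metric $G$ in the fibre directions as one approaches the degenerate fibre, equivalently the failure of $g\,u_n\to\tilde p$ to be uniform in $g$ --- and it is the non-compactness of the structure group that makes this nontrivial, and that is why the closed Friedmann and Schwarzschild cases had to be checked separately in the literature.
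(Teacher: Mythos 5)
Your overall strategy --- reduce Theorem~\ref{Theorem2} to Theorem~\ref{Theorem1} by checking its two hypotheses for the $b$-boundary topology --- is reasonable and consistent with how the paper organizes the material (the paper itself gives no argument, deferring entirely to \cite{Structured}). The first bullet and the final uniqueness step are fine. But the proof has a genuine gap exactly where you locate the ``main obstacle'': the second bullet is never actually established. From the hypothesis that $\pi^{-1}(p)=\{\tilde p\}$ you correctly extract that $\tilde p$ is a fixed point of the extended $O(3,1)$-action and that $u_n\to\tilde p$ implies $g\,u_n\to\tilde p$ for each \emph{fixed} $g$. What this gives, for an invariant open $V\ni\tilde p$, is only that $V$ contains the full fibres over the points $x_n=\pi(u_n)$, i.e.\ over a sequence converging to $p$ --- not that $V\supseteq O(M)$. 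To get $U=M_*$ you need that \emph{every} $O(3,1)$-orbit in $O(M)$ enters every invariant neighbourhood of $\tilde p$, and that is precisely the non-uniformity-in-$g$ statement you flag at the end without proving. Your heuristic for it also contains a real confusion: the orbit of $u_n$ is the fibre over $x_n$, and orbits over distinct base points are disjoint, so no group motion can make the orbit of $u_n$ ``meet the orbit of an arbitrarily chosen frame over an arbitrary point of $M$.'' The correct target is an accumulation statement (every orbit has $\tilde p$ in its closure, or equivalently the invariant saturation of every ball around $\tilde p$ exhausts $O(M)$), and nothing in your sketch derives this from the degeneracy hypothesis alone.

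The fallback of ``invoking Bosshard and Johnson directly'' does not close the gap either: their results are explicit computations for the closed Friedmann and Schwarzschild metrics, whereas Theorem~\ref{Theorem2} is stated for an arbitrary $p\in\partial_bM$ with degenerate fibre, so the general implication (degenerate fibre $\Rightarrow$ trivial topology at $p$ $\Rightarrow$ only constants prolong) is exactly the content that must be proved and is what \cite{Structured} supplies. One smaller point: from $V\supseteq O(M)$ you conclude $V=\overline{O(M)}$ ``by density,'' but a dense open set need not be the whole space ($O(M)$ itself is dense and open in $\overline{O(M)}$); what saves you in the intended situation is that $\partial_bM$ is the single point $p$, so $V\supseteq O(M)\cup\{\tilde p\}$ already forces $U=M_*$. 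You should say that explicitly rather than appeal to density.
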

\begin{proof} See \cite{Structured}\footnote{The set of all linear frames over $M_*$, if there exist in it degenerate fibres, is not a fibre bundle, in the usual sense, but it is a fibre bundle in the category of structured spaces.}.
\end{proof}

Singularites to which Theorems \ref{Theorem1} and \ref{Theorem2} apply we have in anticipation called malicious singularities.

Theorems \ref{Theorem1} and \ref{Theorem2} at least partially explain 'pathological' situations analyzed by Bosshard \cite{Bosshard} and Johnson \cite{Johnson}. These theorems confirm these `pathologies': their differential structures indeed consist of only constant functions, and their topology of only one open set (besides empty set). Therefore, the differential structure of the closed Friedman model encodes the beginning and end of the Friedman universe (and all other its points) as just one point. However, as soon as we do not extend the differential structure to the $b$-boundary (to the singularity), everything happens as in ordinary space-time.

In our previous works, we stopped at these explanations, now we want to pick up this thread and go further. So we ask the question: if we have the apparatus of structural spaces and its possible sharpening at our disposal, is it possible to go (smoothly or not) through the malicious singularity?  This will be the subject of our analysis in the following section. To move forward, we first have to design curves that could do the work. 

\section{Curves in the Singularity}\label{sec::4}
Let $(M_*, \cC_{M_*})$, $M_* = M \cup \{*\}$, be a space-time manifold $M$ with the malicious singularity $* \in \partial_bM$, where $\cC_{M_*}$ is the sheaf of smooth functions on $M_*$. The topology on $M_*$ is
\begin{align*}
\topol M_* = \topol M \cup \{M_*\}
\end{align*}
where $\{ M_*\} $ is the only open subset containing the singularity $*$. The cross-sections of the sheaf $\cC_{M_*}$ are of the form
\begin{align*}
\cC_{M_*}(M_*) = \R, \qquad \cC_{M_*}(U) = \Cinf (U)
\end{align*}
for $U \in \topol M$.

We will analyze curves passing through the singularity. Let us consider two types of smooth curves (smooth -- in the sense of structured spaces):
\begin{itemize}
\item
$\gamma: I \to M_*$ where $I = \langle 0, 1\rangle$ is endowed with the standard topology, $\gamma|\langle 0, 1) \subset M$ and $\gamma (1)  = *$,
\item
$\gamma_*: I_* \to M_* $ where $I_* = \langle 0, 1\rangle$ is endowed with the topology $\topol \langle0,1) \cup \{I_*\}$ (mimicking the topology of $M_*$), along with $\gamma_*|\langle 0, 1) \subset M$ and $\gamma_*(1)  = *$ as above.
\end{itemize}

Let us notice that the injection $\imath : I \to I_*$, defined simply as the identity mapping $\imath(t) = t, \, t \in \langle0,1\rangle$, is continuous.

The differential structure on $I_*$ is defined as a sheaf on $I_*$, $\cC_{I_*} = \R, \, \cC_{I_*}(U) = \Cinf (U)$ for $U \in \topol \langle 0, 1 )$. Our curves can be thus regarded as morphisms
\begin{align*}
& \gamma : (I, \Cinf ) \to (M_*, \cC_{M_*}),
\\
& \gamma_* : (I_*, \cC_{I_*} ) \to (M_*, \cC_{M_*}).
\end{align*}

It can be easily seen that the injection $\imath$ is a smooth mapping of parameter spaces of structured spaces $(I, \Cinf )$ and $(I_*, \cC_{I_*} )$. This allows us  to define the transformation of the curve $\gamma_* $ into the curve $\gamma $.
\begin{align*}
\gamma = \gamma_* \circ \imath .
\end{align*}
(see the commutative diagram below).
\begin{center}
\begin{tikzcd}
I \arrow[d, "\imath " description] \arrow[r, "\gamma" description] & M_* \\
I_* \arrow[ru, "\gamma_*" description]                        &    
\end{tikzcd}
\end{center}

The following lemmas present the smoothness properties of curves $\gamma $ and $\gamma_*$.
\begin{Lemma} \label{Lemma1}
Curve $\gamma : I \to M_*$ is smooth if and only if curve $\gamma|\langle 0, 1): \langle 0, 1) \to M$  is smooth.
\end{Lemma}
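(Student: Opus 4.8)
The statement is an "if and only if" about smoothness of $\gamma$ in the category of structured spaces, so the plan is to unwind the definition of smoothness given in Section~\ref{sec::2} and exploit the very special shape of the sheaf $\cC_{M_*}$, namely $\cC_{M_*}(M_*) = \R$ and $\cC_{M_*}(U) = \Cinf(U)$ for $U \in \topol M$. Recall that $\gamma$ is smooth iff $\gamma^*\cC_{M_*} \subset \Cinf$ on $I$, i.e.\ for every open $V \subset M_*$ and every $g \in \cC_{M_*}(V)$ one has $g \circ \gamma \in \Cinf(\gamma^{-1}(V))$. The open sets of $M_*$ are of two kinds: those $U \in \topol M$ not containing $*$, and the single open set $M_*$ itself containing $*$.

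\textbf{The easy direction} ($\gamma$ smooth $\Rightarrow$ $\gamma|\langle 0,1)$ smooth). Here I would simply restrict attention to open sets $U \in \topol M$. For such $U$ we have $\gamma^{-1}(U) \subset \langle 0,1)$ since $\gamma(1) = * \notin U$, and $g \in \Cinf(U)$ pulls back to a smooth function on $\gamma^{-1}(U)$; this is exactly the statement that $\gamma|\langle 0,1): \langle 0,1) \to M$ is smooth as a map of structured spaces $(\langle 0,1), \Cinf) \to (M, \Cinf_M)$. One should also note that $\gamma|\langle 0,1)$ is continuous because $\gamma$ is and $\langle 0,1)$ carries the subspace topology.

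\textbf{The converse direction} ($\gamma|\langle 0,1)$ smooth $\Rightarrow$ $\gamma$ smooth). This is where the malicious character of the singularity does the work. For open sets $U \in \topol M$ the pullback condition holds by hypothesis, exactly as above. The only remaining open set to check is $M_*$ itself, and $\cC_{M_*}(M_*) = \R$: any global section $g$ is a \emph{constant} function on $M_*$. Hence $g \circ \gamma$ is constant, therefore trivially in $\Cinf(I)$ — and this is the crucial point, because it means no smoothness obstruction can arise at the endpoint $t = 1$ where $\gamma$ hits the singularity. One should also verify continuity of $\gamma: I \to M_*$: the preimage of $M_*$ is $I$, and the preimage of any $U \in \topol M$ is open in $\langle 0,1)$ hence open in $I$, so $\gamma$ is continuous. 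Assembling these observations gives smoothness of $\gamma$.

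\textbf{Expected main obstacle.} There is essentially no analytic obstacle here — the proof is a bookkeeping exercise with the definition of a morphism of structured spaces. The one subtlety to handle carefully is the bookkeeping around the endpoint: one must be explicit that every open set of $M_*$ either avoids $*$ (reducing to the hypothesis) or equals $M_*$ (reducing to constants), so that the potentially problematic behaviour of $\gamma$ near $t=1$ is never "seen" by any test function $g$. In other words, the content of the lemma is that the degeneracy of $\cC_{M_*}$ at the singularity — the very pathology of Theorems~\ref{Theorem1} and~\ref{Theorem2} — is precisely what makes $\gamma$ automatically smooth once its restriction to $\langle 0,1)$ is. I would close by remarking that the same mechanism will reappear, with a twist, when analyzing $\gamma_*$.
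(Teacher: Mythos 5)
Your proposal is correct and follows essentially the same route as the paper's proof: both unwind the definition of a morphism of structured spaces, split the open sets of $M_*$ into those lying in $\topol M$ (where the pullback condition is literally the smoothness of $\gamma|\langle 0,1\rangle\setminus\{1\}$) and the single set $M_*$ itself (where the sections are constants and pull back trivially). Your version merely spells out the two directions and the continuity check more explicitly than the paper does.
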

\begin{proof} 
Recall that the smoothness of $\gamma$ means that $\gamma^*(\cC_{M_*}(U)) \subset \Cinf(\gamma^{-1}(U))$ for any $U \in \topol M$ and $\gamma^*(\cC_{M_*}(M_*)) \subset \Cinf(\gamma^{-1}(M_*))$. But since $\cC_{M_*}(U) = \Cinf (U)$, the former condition is equivalent to the smoothness of $\gamma|\langle 0, 1)$, whereas the latter condition boils down to $\gamma^*(\R) \subset \Cinf(\langle 0, 1))$ (with the elements of $\R$ interpreted as the constant functions), what is trivially true.
\end{proof}

\begin{Lemma} \label{Lemma2}
Curve $\gamma_*: I_* \to M_*$ is smooth if and only if curve $\gamma : I \to M_*$ is smooth.
\end{Lemma}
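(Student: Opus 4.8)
The plan is to prove the two implications separately; almost all the work is in a single direction, and even there the argument reduces to the observation that all the ``singular'' data on both sides sit over one distinguished open set on which only constant functions are available.

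First I would treat the implication ``$\gamma_*$ smooth $\Rightarrow$ $\gamma$ smooth'', which is immediate. We have already noted that the identity injection $\imath\colon(I,\Cinf)\to(I_*,\cC_{I_*})$ is smooth and that $\gamma=\gamma_*\circ\imath$; since smooth mappings of structured spaces compose, $\gamma$ is smooth. Concretely, for $V\in\topol M_*$ and $g\in\cC_{M_*}(V)$ one has $\gamma^{*}g=\imath^{*}(\gamma_*^{*}g)$, where $\gamma_*^{*}g\in\cC_{I_*}(\gamma_*^{-1}(V))$ by smoothness of $\gamma_*$ and then $\imath^{*}(\gamma_*^{*}g)\in\Cinf(\gamma^{-1}(V))$ by smoothness of $\imath$.

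The substance is the converse: assuming $\gamma$ smooth, one must show $\gamma_*$ smooth, i.e.\ $\gamma_*^{*}(\cC_{M_*}(V))\subset\cC_{I_*}(\gamma_*^{-1}(V))$ for every $V\in\topol M_*$. Since $\topol M_*=\topol M\cup\{M_*\}$, there are exactly two cases. For $V=M_*$ we have $\cC_{M_*}(M_*)=\R$ and $\gamma_*^{-1}(M_*)=I_*$, so the condition reduces to $\gamma_*^{*}(\R)\subset\cC_{I_*}(I_*)=\R$, which is clear since pullbacks of constant functions are constant. For $V\in\topol M$ we have $*\notin V$, hence $\gamma_*^{-1}(V)\subset\langle0,1)$, and since $\gamma$ and $\gamma_*$ are the very same set-theoretic map $\langle0,1\rangle\to M_*$ (they differ only in the structured-space data carried by the domain), $\gamma_*^{-1}(V)=\gamma^{-1}(V)$. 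This set is open in the standard topology of $\langle0,1)$ by continuity of the smooth map $\gamma$, hence it belongs to $\topol I_*=\topol\langle0,1)\cup\{I_*\}$, and on such open subsets the sheaf $\cC_{I_*}$ coincides with $\Cinf$ by definition. As $\cC_{M_*}(V)=\Cinf(V)$ and smoothness of $\gamma$ gives $\gamma^{*}(\Cinf(V))\subset\Cinf(\gamma^{-1}(V))$, we conclude $\gamma_*^{*}(\cC_{M_*}(V))=\gamma^{*}(\Cinf(V))\subset\Cinf(\gamma^{-1}(V))=\cC_{I_*}(\gamma_*^{-1}(V))$, as required.

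I do not anticipate a genuine obstacle here. The only point that deserves a moment's attention is the extra open set $I_*$ adjoined to the topology of the parameter space: one has to confirm that it imposes no smoothness requirement beyond those already handled, and it does not, precisely because $\gamma_*$ carries $I_*$ onto the ``singular'' open set $M_*$, over which only constants live and constants pull back to constants. Stripped of this, the proof is just the identity identification $\gamma=\gamma_*$ on $\langle0,1)$ combined with Lemma~\ref{Lemma1} --- in effect, that lemma transported to the modified parameter space.
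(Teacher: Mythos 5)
Your proof is correct and follows essentially the same route as the paper's: the case analysis over $V=M_*$ (constants pull back to constants) versus $V\in\topol M$ (where $\gamma_*$ and $\gamma$ coincide as set maps and the condition reduces to that of Lemma~\ref{Lemma1}) is exactly the paper's argument. The only cosmetic difference is that you dispatch the forward implication via the smooth injection $\imath$ and $\gamma=\gamma_*\circ\imath$, whereas the paper obtains both directions at once from the equivalence of the unpacked conditions.
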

\begin{proof} The smoothness of $\gamma_*$ means that $(\gamma_*)^*(\cC_{M_*}(U)) \subset \cC_{I_*}(\gamma_*^{-1}(U))$ for any $U \in \topol M$ and $(\gamma_*)^*(\cC_{M_*}(M_*)) \subset \cC_{I_*}(\gamma_*^{-1}(M_*))$. By defition of the respective sheaves, the latter condition boils down to $(\gamma_*)^*(\R) \subset \R$, what is trivially true (even as an equality). On the other hand, the former condition reduces to $\gamma^*(\Cinf(U)) \subset \Cinf(\gamma^{-1}(U))$, where we have replaced $\gamma_*$ with $\gamma$ because they are equal as maps ($\gamma_*(t) = \gamma(t)$ for all $t \in \langle 0, 1\rangle$). But as we have seen in the proof of Lemma \ref{Lemma1}, this is already equivalent to the smoothness of $\gamma$.

\end{proof}

As we can see, both curves, $\gamma $ and $\gamma_*$ are smooth in the category of structured spaces, even though the topologies of their parameter spaces are different, $\topol I_* \subset \topol I$.

Let us now consider a curve $\gamma $ hitting the malicious singularity, such as the one likely hiding in the Big Bang or beyond the horizon of a black hole; alternatively, a curve passing through the singularity from an expanding phase of the universe to its expanding phase or, if you prefer, entering a black hole singularity, and then emerging from the `other side of it' as of a white hole. For brevity, we will talk about the transition from the universe $M_1$, through the singularity $*$, to the universe $M_2$, where $M_1 \cap M_2 = \emptyset$.

We thus have the following situation: $M_* = M_1 \cup \{*\} \cup M_2$ ($M_*$ will also be denoted by $M_1 \star M_2$) with the topology
\begin{align*}
\topol M_* = \topol (M_1 \cup M_2) \cup \{M_*\}
\end{align*}
where $M_*$ is the only open set containing the singularity.

For any smooth curves $\gamma_1: (0,1) \rightarrow M_1$ and $\gamma_2 : (1,2) \rightarrow M_2$ let us now define the curve $\gamma_1 \star \gamma_2: (0,2) \rightarrow M_*$ ``passing through'' the singularity
\begin{align*}
(\gamma_1 \star \gamma_2)(t) =
\left\{
\begin{array}{cc}
\gamma_1(t) \, \in M_1 & \mbox{if $t \in (0, 1)$}\\
*  & \mbox{if $t = 1$}\\
\gamma_2(t) \, \in M_2 & \mbox{if $t \in (1,2)$}
\end{array}
\right. 
\end{align*}
where the set of parameters, denoted $I_* = (0,2)$ in what follows, is endowed with the topology
\begin{align*}
\topol I_* = \topol ((0, 1) \cup (1, 2)) \cup \{(0, 2)\}.
\end{align*}

As a kind of formal summary of our considerations, we can formulate the following proposition
\begin{Proposition}
The curve $\gamma_1 \star \gamma_2: I_* \to M_*$ is continuous and smooth if and only if $\gamma_1$ and $\gamma_2$ are continuous and smooth.
\end{Proposition}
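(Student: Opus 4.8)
The plan is to mimic, almost verbatim, the two-step strategy used in Lemmas \ref{Lemma1} and \ref{Lemma2}, now carrying along the extra bookkeeping caused by splitting the domain into two pieces $(0,1)$ and $(1,2)$ and by the fact that continuity is no longer automatic (the map $\gamma_1\star\gamma_2$ must be checked to be well behaved at $t=1$). I would first dispose of continuity, then of smoothness, and finally note that the two ``if and only if'' statements combine into the stated equivalence.

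\textbf{Step 1: continuity.} I would use the topology $\topol I_* = \topol((0,1)\cup(1,2))\cup\{(0,2)\}$ and the analogous topology on $M_*$. The preimage of an open $U\in\topol(M_1\cup M_2)$ under $\gamma_1\star\gamma_2$ equals $\gamma_1^{-1}(U)\cup\gamma_2^{-1}(U)$ (since $*\notin U$), which is open in $(0,1)\cup(1,2)$ exactly when $\gamma_1$ and $\gamma_2$ are each continuous; and the preimage of the only remaining open set $M_*$ is all of $I_*$, which is open. Conversely, restricting a continuous $\gamma_1\star\gamma_2$ to the open subsets $(0,1)$ and $(1,2)$ recovers continuity of $\gamma_1$ and $\gamma_2$. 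Note that the subtle point at $t=1$ — which would obstruct continuity for a curve into an ordinary Hausdorff target — causes no trouble here precisely because every neighbourhood of $*$ is the whole space, so there is nothing to verify there. This is exactly the phenomenon already exploited in the preceding lemmas, and it is the conceptual heart of the construction, but it is not technically the hard part.

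\textbf{Step 2: smoothness.} By the definition of smoothness of a morphism of structured spaces, $\gamma_1\star\gamma_2$ is smooth iff $(\gamma_1\star\gamma_2)^*(\cC_{M_*}(V))\subset\cC_{I_*}((\gamma_1\star\gamma_2)^{-1}(V))$ for every $V\in\topol M_*$. For $V=M_*$ this reduces to $(\gamma_1\star\gamma_2)^*(\R)\subset\cC_{I_*}(I_*)=\R$, which is trivially true (and is an equality), just as in Lemma \ref{Lemma2}. For $V=U\in\topol(M_1\cup M_2)$, the condition $\cC_{M_*}(U)=\Cinf(U)$ turns the requirement into $(\gamma_1\star\gamma_2)^*(\Cinf(U))\subset\Cinf((\gamma_1\star\gamma_2)^{-1}(U))$; since $(\gamma_1\star\gamma_2)^{-1}(U)$ is a disjoint union of an open subset of $(0,1)$ and an open subset of $(1,2)$, and on those pieces $\gamma_1\star\gamma_2$ agrees with $\gamma_1$ and $\gamma_2$ respectively, this is equivalent to $\gamma_1^*(\Cinf(U))\subset\Cinf(\gamma_1^{-1}(U))$ together with $\gamma_2^*(\Cinf(U))\subset\Cinf(\gamma_2^{-1}(U))$ — i.e.\ to smoothness of $\gamma_1$ and of $\gamma_2$ (invoking Lemma \ref{Lemma1}). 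Combining Steps 1 and 2 yields the proposition.

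\textbf{Main obstacle.} I expect the only genuine subtlety to be making the domain-splitting argument airtight: one must be careful that an open subset of $I_*$ meeting both $(0,1)$ and $(1,2)$ but not containing $1$ is not open unless it is all of $(0,2)$, so that the decomposition ``open set $=$ (open in $(0,1)$) $\sqcup$ (open in $(1,2)$)'' for sets avoiding $1$ is correct, and that pullbacks of smooth functions are checked component by component. Once this is set up cleanly, everything else is a direct transcription of the earlier lemmas, with the crucial point — the triviality of every condition involving the neighbourhood $M_*$ of the singularity — already isolated there.
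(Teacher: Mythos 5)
Your proposal is correct and follows essentially the same route as the paper, which itself merely notes that continuity is immediate from the topologies and that smoothness follows by repeating the arguments of Lemmas \ref{Lemma1} and \ref{Lemma2} with $I$ and $M$ replaced by $(0,1)\cup(1,2)$ and $M_1\cup M_2$. You have simply written out the details (the componentwise splitting of preimages and the triviality of the condition at $V=M_*$) that the paper leaves implicit.
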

\begin{proof}
The continuity part is obvious from the very definition of the topologies involved. As for the smoothness part, the reasoning goes along the same lines as in the proofs of Lemmas \ref{Lemma1} \& \ref{Lemma2} with some simple modifications, such as replacing $I$ and $M$ with $(0,1) \cup (1,2)$ and $M_1 \cup M_2$, respectively.
\end{proof}

This method of passing through the singularity opens up yet another problem. The Noether theorem shows that breaking the translational time symmetry leads to violating the principle of energy conservation (see, e.g. \cite{Hanca,Kosmann}). Introducing a singularity to the reparamerization of the curve $\gamma_1 \star \gamma_2$ and synchronizing it with the cosmological singularity creates just such a situation. Although the curve $\gamma_1 \star \gamma_2$ passes smoothly through the singularity (in the sense of the theory of differential spaces), at the ``transition'' moment anything can happen.

Even more importantly, let us notice that the above construction is very general. In fact, in this way one can ``smoothly join'' even the curves which do not approach the $b$-boundary (in particular, nothing in the above construction prohibits $\gamma_1$ and $\gamma_2$ to be \emph{constant} curves!). It is thus worth investigating whether the above construction can be made sharper in the case of curves that \emph{do} approach the singularity by explicitly involving the $b$-boundary definition.

\end{document}